\documentclass[a4paper,10pt,preprintnumbers,floatfix,superscriptaddress,aps,pra,twocolumn,showpacs,notitlepage,longbibliography]{revtex4-2}
\usepackage[utf8]{inputenc}
\usepackage[T1]{fontenc}
\usepackage[sc,osf]{mathpazo}
\usepackage{amsmath,amsthm,mathtools}
\usepackage{bbm}
\usepackage{marvosym}
\usepackage{graphicx}
\usepackage{fancyvrb}
\usepackage{tikz}
\usetikzlibrary{matrix}
\usepackage{xcolor}
\usepackage{xr}
\usepackage{float}
\usepackage{appendix}
\usepackage{hyperref}
\usepackage{zref-clever}
\usepackage[capitalise]{cleveref}
\usepackage[caption=false]{subfig}

\renewcommand{\Pr}{\mathbb{P}}
\newcommand{\eps}{\varepsilon}
\newcommand{\epsi}{\eps_{\mathrm{I}}}
\newcommand{\epsii}{\eps_{\mathrm{II}}}
\DeclareMathOperator{\LandauO}{\mathrm{O}} 

\providecommand\given{}
\newcommand\SetSymbol[1][]{%
  \nonscript\:#1\vert
  \allowbreak
  \nonscript\:
  \mathopen{}}
\DeclarePairedDelimiterX\Set[1]\{\}{%
  \renewcommand\given{\SetSymbol[\delimsize]}
  #1
}
\DeclarePairedDelimiterX{\abs}[1]{\lvert}{\rvert}{{#1}} 

\DeclareMathOperator{\tr}{tr}

\newcommand{\CC}{C^\ast}
\newcommand{\IP}{\mathrm{IP}}
\newcommand{\DISJ}{\mathrm{DISJ}}

\newcommand{\EQ}{\mathrm{EQ}}
\newcommand{\INDEX}{\mathrm{INDEX}}
\newcommand{\kINT}{k\text{-}\mathrm{INT}}

\usepackage{thmtools}
\usepackage{thm-restate}
\newtheorem{theorem}{Theorem}

\newtheorem{proposition}[theorem]{Proposition}
\newtheorem{observation}[theorem]{Observation}
\newtheorem{example}{Example}

\usepackage[nolist,withpage]{acronym}
\newacro{CHSH}{Clauser-Horne-Shimony-Holt}
\newacro{PR}{Popescu-Rohrlich}
\newacro{IC}{information causality}

\begin{document}
\title{Communication complexity bounds from information causality}
\author{Nikolai Miklin}
\thanks{These authors contributed equally to this work.\\ \{nikolai.miklin,prabhav.jain\}@tu-darmstadt.de}
\affiliation{Institute for Quantum-Inspired and Quantum Optimization, Hamburg University of Technology, Germany}
\affiliation{Institute for Applied Physics, Technical University of Darmstadt, Darmstadt, Germany}
\author{Prabhav Jain}
\thanks{These authors contributed equally to this work.\\ \{nikolai.miklin,prabhav.jain\}@tu-darmstadt.de}
\affiliation{Department of Computer Science, Technical University of Darmstadt, Germany}
\author{Mariami Gachechiladze}
\affiliation{Department of Computer Science, Technical University of Darmstadt, Germany}

\date{February 11, 2026}
\begin{abstract}
Communication complexity, which quantifies the minimum communication required for distributed computation, offers a natural setting for investigating the capabilities and limitations of quantum mechanics in information processing. We introduce an information-theoretic approach to study one-way communication complexity based solely on the axioms of mutual information. 
Within this framework, we derive an extended statement of the information causality principle, which recovers known lower bounds on the communication complexities for a range of functions in a simplified manner and leads to new results.
We further prove that the extended information causality principle is at least as strong as the principle of non-trivial communication complexity in bounding the strength of quantum correlations attainable in Bell experiments.
Our study establishes a new route for exploring the fundamental limits of quantum technologies from an information-theoretic viewpoint.
\end{abstract}
	
\maketitle
\hypersetup{pdftitle = {Communication complexity bounds from information causality},
           pdfauthor = {Nikolai Miklin, Prabhav Jain, Mariami Gachechiladze},
           pdfsubject = {Quantum foundations},
           pdfkeywords = { 
                  quantum, communication, complexity, information causality,
                  non-trivial communication complexity,
                  entanglement-assisted communication,
                  physical principle, Bell, nonlocality, non-locality,
                  axioms, mutual information, set of quantum correlations,
                  }
      }
	
\section{Introduction}
Quantum information technology is a rapidly developing field, motivated by the potential advantages that quantum mechanics can offer for information-processing tasks, particularly in communication and computation. 
Studying the limitations of these quantum advantages is as important as exploring new applications, not only from a foundational perspective, but also for practical considerations.
It is especially interesting when these limitations match the best known algorithms, as is the case, for example, with Grover's search algorithm~\cite{grover1996fast,bennett1997strengths}.

At the intersection of communication and computation lies the task of \emph{distributed computing}, which provides a natural setting to investigate the capabilities and limitations of quantum mechanics in information processing. 
In this setting, two or more parties collaboratively compute a function whose input is distributed among them. The idealized model assumes that each party has unlimited computational power, and the objective is to minimize the amount of communication required. The minimum communication necessary to compute a function is called its \emph{communication complexity}. This concept arises naturally in distributed systems and has broad implications for theoretical computer science, including minimal models of computation and data structures~\cite{kushilevitz1996communication,hromkovic2010communication}.

The idea that quantum communication or shared entanglement could offer advantages in distributed computing was recognized early on~\cite{yao1993quantum,cleve1997substituting}. In fact, the savings in terms of communicated qubits, or classical bits, respectively, can be \emph{exponential}~\cite{buhrman1998quantum}, even when allowing for some error in computation~\cite{raz1999exponential}.
At the same time, quantum advantages in communication complexity are expected to be subject to fundamental limits. 
In fact, it is conjectured that for total functions, i.e., for the situations where there is no restriction on the input to the distributed computation, the quantum communication complexity is at most polynomially smaller than the classical counterpart (see e.g.,~\cite{jain2009}).

Of particular interest, motivated by the afore-mentioned long standing conjecture, and by applications in neighbouring areas of complexity theory, such as streaming algorithms~\cite{lovett2023}, are \emph{lower bounds} on quantum communication complexity~\cite{nayak1999,Buhrman2001,deWolf2000,klauck2001,sherstov2008,sherstov2010,linial2007,jain2003,vandam2002renyi,Montanaro2007lower,ambainis2006,aaronson2004,jain2009,Boddu2023relatingoneway,Razborov2003,podolskii2024,zhang2011,chakrabarti2003,ziv2004}.
Despite an extensive body of literature, only a few methods are known that can be applied to derive lower bounds for general functions~\cite{aaronson2004,linial2007}.
Somewhat surprisingly, the restriction on the number of rounds of the communication protocol, in particular, to the \emph{one-way} model, in which only one party sends a message to the other, does not appear to make the problem easier~\cite{aaronson2004}.

In this paper, we propose an extension of the information causality principle~\cite{pawlowski2009information} to distributed computing scenario. 
So far, this principle has been proposed and applied to bounding the strength of quantum correlations in Bell experiments~\cite{bell1964einstein,cirelson1980quantum,pawlowski2009information,allcock2009,miklin2021information,gachechiladze2022quantum,jain2024informationcausality}, relying solely on the set of axioms for mutual information.
Here, we show how an extension of this principle can be applied to derive lower bounds on the communication complexity for arbitrary functions in the bipartite scenario with one-way entanglement-assisted classical communication.
Via the teleportation protocol~\cite{bennett1985teleporting}, the same bound applies to one-way quantum communication, up to a factor of two, and also to public-coin classical communication complexity, since entanglement is a stronger resource.
We illustrate our approach with several examples. For some functions, we recover known asymptotically tight lower bounds using substantially simpler arguments, while for others, we obtain new results.

An additional motivation to study bounds on communication complexity in the entanglement-assisted classical communication model comes from their close connection to Bell nonlocality~\cite{buhrmann2010nonlocality}.
A central and counter-intuitive result in this direction is due to van Dam~\cite{vandam2013implausible,cleve2000unpublished}: if the parties have access to distributed correlations, known as the \ac{PR} boxes~\cite{popescu1994quantum}, then any function can be computed in a distributed manner using only a single bit of communication for any input length.
Motivated by this observation, it was later proposed as a physical principle, known as \emph{non-trivial communication complexity}, that there must exist functions whose communication complexity grows with the size of their input~\cite{brassard2006limit}.
We show that the proposed extension of the information causality principle implies the principle of non-trivial communication complexity and establishes a new tool for bounding the set of quantum correlations.

\section{Preliminaries}
In the standard communication complexity scenario, one considers two parties, Alice and Bob, who face the problem of computing the value of a Boolean function (known to them in advance) on some random input distributed between them in each round~\cite{yao1979some}.
The objective is to minimize the amount of communication between the parties required to compute the function successfully on every input. Various types of scenarios are distinguished depending on the type of communication (one-way or two-way), the resources available (e.g., randomness or shared entanglement), and the allowed degree of error in the computed value.
As one would expect, the communication complexity can vary significantly depending on the case considered. We refer the reader to Refs.~\cite{klauck2000quantumc,brassard2003quantum,buhrmann2010nonlocality,deWolf2022quantum} for a comprehensive, although slightly outdated, overview of the known results in quantum communication complexity.

In this work, we focus on the one-way entanglement-assisted model depicted in \cref{fig:scenario}. Let the input sets $X$ and $Y$ be general finite sets. 
It is most common to consider $X=Y=\{0,1\}^n$ for some positive integer $n$, but we also encounter other cases, e.g., $Y=\{0,1,\dots,n-1\}\eqqcolon [n]$. 
We only consider the case when Alice's and Bob's inputs, $x$ and $y$, can take any values from $X$ and $Y$, respectively, which is often referred to as the case of a \emph{total} function $f:X\times Y \to \{0,1\}$, as opposed to case when there is some promise on pairs $(x,y)$ that can occur. For our purposes, it is convenient to introduce a random variable $g$ taking values in $\{0,1\}$, which represents Bob's guess of the value of the function $f$ evaluated on $x$ and $y$. 
We restrict our analysis to the binary alphabet, as many of the relevant phenomena already manifest in this case. Nevertheless, the formalism introduced here can be naturally adapted to arbitrary finite alphabets.

We define the communication complexity of a function $f$ in the considered scenario in \cref{fig:scenario} as the fewest bits of communication needed, 
\begin{align}\label{eq:def_cc}
	\CC_\eps(f)\coloneqq \min~& m\\
    \text{s.t.}~ & \Pr[g=f(i,j)\vert x=i,y=j]\geq 1-\eps,\; \forall i,j,\nonumber
\end{align}
where $\eps\in [0,\frac12)$ is the allowed error in distributively computing $f$ for each pair of inputs $i\in X$, $j\in Y$, and the minimization is taken over all communication protocols.
Following common notation in the literature, we use an asterisk ($\ast$) to indicate that parties have access to unlimited shared entanglement.
The case of $\eps=0$ corresponds to the so-called deterministic communication complexity $\CC_0(f)$. 

Naturally, we are not interested in the exact values of $\CC_\eps(f)$ but in its asymptotic dependence on the cardinality of $X$ and $Y$. 
Importantly, the error bias $\eps$ in \cref{eq:def_cc} must be independent of the size of the input.
A trivial observation is that $\CC_\eps(f)\leq \CC_0(f)\leq \log(\abs{X})$, i.e., guessing with error requires less communication than guessing perfectly, and if Alice sends her input $x$ to Bob using at most $\log(\abs{X})$ bits, he can calculate the value of any function of $x$ and $y$ deterministically.

\begin{figure}[t!]
	\includegraphics[width=.9\linewidth]{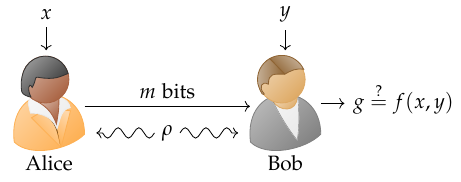}
	\caption{Considered scenario: at each round, Alice and Bob receive random inputs $x$ and $y$, taking values in $X$ and $Y$, respectively. Alice sends $m$ bits of data to Bob, who then produces a guess $g$ of the value of $f(x,y)$. Alice and Bob have access to unlimited shared entanglement.}
	\label{fig:scenario}
\end{figure}

In this work, we study the following commonly considered functions:
\begin{enumerate}
    \item[1.] the index function
    \begin{equation}\label{eq:def_index}
    	\INDEX_n(x,y)\coloneqq x_y\ ,
    \end{equation}
    for $X=\{0,1\}^n$, $Y=[n]$, which has been extensively studied in the context of random access codes~\cite{Ambainis1999};
    \item[2.] the inner product function
    \begin{equation}\label{eq:def_ip}
    	\IP_n(x,y)\coloneqq \bigoplus_{i=0}^{n-1} x_i\cdot y_i\ ,
    \end{equation}
    for $X=Y=\{0,1\}^n$, where $x_i$ and $y_i$ are the $i$-th bits of $x=x_0x_1\dots x_{n-1}$ and $y=y_0y_1\dots y_{n-1}$, respectively, and $\oplus$ is the sum modulo $2$;
    \item[3.] the disjointness function
    \begin{equation}\label{eq:def_disj}
    	\DISJ_n(x,y)\coloneqq \begin{cases} 
    		0 & \exists\, i: x_i=y_i=1,\\
    		1 & \text{otherwise},
    	\end{cases}
    \end{equation}
    also for $X=Y=\{0,1\}^n$; and
    \item[4.] the equality function
    \begin{equation}\label{eq:def_gt}
    	\EQ_n(x,y)\coloneqq [x=y]\ ,
    \end{equation}
    for $X=Y=[2^n]$, where $[\cdot]$ denotes the Iverson bracket.
\end{enumerate}
The entanglement-assisted one-way classical communication complexity of the above functions is known. 
For the inner product, disjointness, and index functions $\CC_\eps(\IP_n),\CC_\eps(\DISJ_n),\CC_\eps(\INDEX_n)\in\Theta(n)$~\cite{cleve1998ip,klauck_gt_disj_q,nayak1999}.
Surprisingly, for the equality function it only holds that $\CC_0(\EQ_n)\in\Omega(n)$~\cite{Hyer2002_q_eq_exact}, while  $\CC_\eps(\EQ_n)\in \LandauO(1)$ for $\eps>0$~\cite{kushilevitz1996communication}. 
Additionally, we consider
\begin{enumerate}
\item[5.] the $k$-intersect function
\begin{equation}\label{eq:def_kint}
	\kINT_n(x,y)\coloneqq \left[\sum_{i=0}^{n-1}x_i\cdot y_i \geq k\right]\ ,
\end{equation}
for $\quad X=Y=\{0,1\}^n$ and $k\in\Set{1,2,\dots,\lfloor n/2\rfloor}$. 
For $k=1$, this function reduces to the standard intersect function, for which it is known that $\CC_\eps(1\text{-}\mathrm{INT}_n)\in \Theta(n)$~\cite{klauck_gt_disj_q}, while the case of $k\geq 2$, to the best of our knowledge, has not been considered so far in the literature.
\end{enumerate}

\section{Results}
\label{sec:ULB}
We start by assuming a measure of conditional mutual information $I(A;B\vert C)$ defined for systems $A,B$, and $C$, that satisfies the following set of axioms:
\begin{enumerate}
	\item[(i)] non-negativity: $I(A;B\vert C)\geq 0$,
	\item[(ii)] chain rule: $I(A;B\vert C) = I(A;B,C)-I(A;C)$,
	\item[(iii)] data processing: $I(A;B\vert C) \geq I(\Phi(A);B,C)$, for a physical map $\Phi$,
	\item[(iv)] and consistency with Shannon mutual information in case $A,B$, and $C$ are (classical) random variables.
\end{enumerate}
These axioms are satisfied by the mutual information $I(A;B\vert C) = S(A,C)+S(B,C)-S(A,B,C)-S(C)$, where $S(A,C) = -\tr(\rho_{A,C}\log\rho_{A,C})$ denotes the von Neumann entropy of the joint system $A,C$ in the state $\rho_{A,C}$, and similarly for the other terms~\cite{NielsenChuang}. 

Based on the above axioms, we derive the following extension of the \ac{IC} principle~\cite{pawlowski2009information}, formulated as a lower bound on communication complexity.
\begin{restatable}{theorem}{ULB}\label{th:ULB}
    Let $f:X\times Y\to \Set{0,1}$ be a function computed distributively, and let $y^0,y^1,\dots,y^{\abs{Y}-1}$ be an ordering of the elements of $Y$.
    Then, the entanglement-assisted one-way classical communication complexity of $f$ satisfies
    \begin{equation}\label{eq:ULB}
        \CC_\eps(f)\geq \sum_{i=0}^{\abs{Y}-1}I(g;f(x,y^i)\vert \Set{f(x,y^j)}_{j=0}^{i-1},y=y^i),
    \end{equation}
    where the Shannon conditional mutual information is calculated for random variables $x$, $y$, and $g$ taking values in $X$, $Y$, and $\Set{0,1}$, respectively, satisfying $\Pr[g=f(i,j)\vert x=i,y=j]\geq 1-\eps$, for all $i\in X,j\in Y$, with $\eps\in [0,\frac12)$.
\end{restatable}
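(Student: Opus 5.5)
We follow the original \ac{IC} argument~\cite{pawlowski2009information}, using only axioms (i)--(iv). Fix an optimal protocol with $m=\CC_\eps(f)$ bits. Let $B$ be Bob's share of the entangled state and $M$ Alice's $m$-bit message. Take $x$ independent of $y$ and of $B$; Alice's message depends only on $x$ and her half of the state, and $B$ alone is independent of $x$. Write $F_i\coloneqq f(x,y^i)$ and $F_{<i}\coloneqq \Set{F_j}_{j=0}^{i-1}$. Each $F_i$ is a deterministic function of $x$, hence a classical random variable independent of $B$ and $y$. Since the statement fixes only the conditional error probabilities and the input distribution is unspecified, I read the right-hand side as computed for a product input distribution with $x$ independent of $y$ and full support on $Y$. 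Then conditioning on $y=y^i$ does not change the joint law of $(x,M,B)$.

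\emph{Step 1 (capacity bound).} We show $I(F_0,\dots,F_{\abs{Y}-1};M,B)\le m$. By the chain rule (ii),
\[
I(\vec F;M,B)=I(\vec F;B)+I(\vec F;M\vert B).
\]
The first term vanishes because $\vec F$ is independent of $B$, which gives a product state. For the second term, the standard \ac{IC} manipulation expands it with the chain rule (ii) in both orders and uses non-negativity (i). This yields $I(\vec F;M\vert B)\le I(M;\vec F,B)\le H(M)\le m$. The bound $I(M;\cdot)\le H(M)\le m$ for a classical $m$-bit $M$ follows from (iv) together with data processing (iii) applied to the identity/copy map.

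\emph{Step 2 (chain rule decomposition).} Apply (ii) repeatedly to obtain
\[
I(\vec F;M,B)=\sum_i I(F_i;M,B\vert F_{<i}).
\]

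\emph{Step 3 (data processing to Bob's guess).} For each $i$, Bob's output on input $y^i$ is $g=\Phi_{y^i}(M,B)$, which is a physical map. The variable $F_{<i}$ is classical and can be carried along. Axiom (iii) therefore gives
\[
I(F_i;M,B\vert F_{<i})\ge I(g;F_i\vert F_{<i},y=y^i).
\]
Here we use that conditioning on $y=y^i$ leaves the joint distribution of $(x,M,B)$ unchanged, so the left-hand side is the same quantity conditioned on $y=y^i$. The resulting quantity is classical, so by (iv) it is the Shannon conditional mutual information that appears in \cref{eq:ULB}. Combining Steps 1--3 gives the claim.

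\emph{Main obstacle.} The delicate point is Step 3. Axiom (iii) is stated as $I(A;B\vert C)\ge I(\Phi(A);B,C)$, with the map acting on the conditioned-on pair rather than on a conditional quantity. We therefore need the conditional form $I(F;M,B\vert C)\ge I(F;\Phi(M,B)\vert C)$ for classical $C$.

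I would first try to derive this by writing each side as $I(F;\cdot,C)-I(F;C)$ via (ii). Then apply (iii) with the extended map $\Phi\otimes\mathrm{id}_C$, which is physical since $C$ is classical and can be copied. The obstacle is that (iii) as literally stated bounds a conditional quantity by an unconditional one. We need its consequence $I(A;B)\ge I(\Phi(A);B)$, which is obtained by taking $C$ trivial. That is the monotonicity required here, applied with $A=(M,B,C)$.

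A second subtlety is that the variables $F_{<i}$ are correlated with $F_i$. Step 1 must therefore bound the joint quantity rather than treat the $F_i$ as independent, which is why the chain rule ordering is essential.
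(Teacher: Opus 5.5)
Your proposal is correct and follows essentially the paper's argument. You decompose $I(\{f(x,y^i)\}_i;M,B)$ by the chain rule, insert $y=y^i$ using independence, apply data processing to Bob's guess $g$ in each summand, and bound the total from above by $I(M;\cdot,B)\le m$ using $I(\cdot\,;B)=0$. The only differences are cosmetic: you bound $I(\vec F;M,B)$ directly instead of first data-processing to $I(x;M,B)$, and you spell out the conditional data-processing step, reading axiom (iii) as monotonicity under local maps, which the paper uses without comment.
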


A proof of \cref{th:ULB} from the axioms (i-iv) of mutual information can be found in~\cref{app:ULB_proof}.
\Cref{th:ULB} holds for any ordering of the elements of $Y$, though, certain orderings can lead to simpler calculations or tighter lower bounds.
To demonstrate the power of the lower bound in \cref{eq:ULB}, we apply it to the functions defined in the preliminaries section.
\begin{example}\label{ex:INDEX}
$\CC_\eps(\INDEX_n)\in\Omega(n)$.
\end{example}
We start with the index function~\eqref{eq:def_index}, which corresponds to the original scenario of the \ac{IC} principle~\cite{pawlowski2009information}. 
Let us take the natural ordering of elements $0,1,\dots,n-1$ in $Y=[n]$, and let $x$ be uniformly distributed.
Since in this case $f(x,i) = x_i$, the values of $f(x,i)$ and $f(x,j)$ for $i\neq j$ are statistically independent, and the lower bound in \cref{th:ULB} simplifies to
\begin{equation}\label{eq:LB_index}
    \CC_\eps(\INDEX_n)\geq \sum_{i=0}^{n-1}I(g;x_i\vert y=i),
\end{equation}
where to get rid of the conditioning on $\Set{f(x,j)}_{j=0}^{i-1}$ for given $i$, one uses the chain rule and non-negativity of mutual information.
Using Fano's inequality~\cite{fano1968transmission} for each summand in \cref{eq:LB_index}, $I(g;x_i\vert y=i)\geq 1-h(\Pr[g=x_i\vert y=i])$, where $h(p)\coloneqq -p\log(p)-(1-p)\log(1-p)$ is the binary entropy, the assumption that $\Pr[g=x_i\vert y=i]=\frac{1}{2}\sum_{j=0}^1\Pr[g=j\vert x_i=j,y=i]\geq 1-\eps$, and the fact that the binary entropy is monotonically decreasing on the interval $(\frac{1}{2},1]$, we obtain that $\sum_{i=0}^{n-1}I(g;x_i\vert y=i)\geq n(1-h(\eps))$, and, therefore, $\CC_\eps(\INDEX_n)\in\Omega(n)$.
It is clear that the above analysis would not change if we took a different ordering of the elements of $Y$.
Similar analysis has been previously done for the index function in the context of streaming bounds~\cite{lovett2023} in Refs.~\cite{nayak1999,Ambainis1999}.

\begin{example}\label{ex:IP}
$\CC_\eps(\IP_n)\in\Omega(n)$.
\end{example}
Next, we consider the inner product function~\eqref{eq:def_ip}.
Let us choose an order of elements of $Y$, such that the first $n$ elements $y^0$,$y^1$, to $y^{n-1}$ are $100\dots 0,$ $010\dots0$, to $000\dots 1$.
For these inputs of Bob, $\IP_n(x,y^i)=x_i$ for $i\in[n]$.
As a result, the first $n$ terms in the sum in \cref{eq:ULB} are the same as for the index function in \cref{eq:LB_index}.
This is sufficient to conclude that $\CC_\eps(\IP_n)\in\Omega(n)$.
If one continues to compute the sum for the remaining $2^n-n$ elements of $Y$, all the conditional mutual information terms vanish, since conditioning on $\Set{x_i}_{i=0}^{n-1}$ is equivalent to conditioning on $x$ and subsequently the value of $f(x,y^i)$ in \cref{eq:ULB} is deterministic. 
It is worth noting that the two functions that we encountered,  $\IP_n$ and $\INDEX_n$ have different domains, yet they lead to the same form of the lower bound in \cref{eq:LB_index}. 

\begin{example}\label{ex:DISJ}
$\CC_\eps(\DISJ_n)\in\Omega(n)$.
\end{example}
Surprisingly, we can make a similar observation for the disjointness function. 
First, notice that one can express this function as $\DISJ_n(x,y) = \prod_{i=0}^{n-1}(1\oplus x_i\cdot y_i)$.
Again, we choose an order of the elements of $Y$ as we did for the inner product function. 
Then, for $y^i$, with $i\in[n]$, we have $\DISJ_n(x,y^i) = 1\oplus x_i$, and since the mutual information is symmetric with respect to the permutation of the values of random variables, we once again get the same expression for the lower bound as in \cref{eq:LB_index}.
Therefore, we conclude that $\CC_\eps(\DISJ_n)\in \Omega(n)$.
It is not clear whether an arbitrary ordering of $Y$ yields a lower bound that scales linearly with $n$, and even if it does, proving this is less straightforward.

\begin{example}\label{ex:EQ}
$\CC_0(\EQ_n)\geq n$.
\end{example}
Deriving deterministic communication complexity is, in general, simpler compared to the setting where errors are allowed.
This is also true for the lower bound in \cref{th:ULB}.
Indeed, setting $\eps=0$, means that $g=f(x,y^i)$ deterministically for all $i\in[\abs{Y}]$, and each of the conditional mutual information terms in \cref{eq:ULB} is equal to the entropy of $f(x,y^i)$.
To prove the bound for the equality function, we take a natural ordering $0,1,\dots,\abs{Y}-1$ of the elements of $Y$. 
For this ordering, knowing whether $f(x,j)$ is $0$ or $1$, for all $j<i$, is equivalent to knowing whether $x<i$ or $x\geq i$.
Therefore, the lower bound in \cref{eq:ULB} takes the form
\begin{equation}
    \CC_0(\EQ_n)\geq \sum_{i=0}^{2^n-1}\Pr[x\geq i]h(\Pr[x=i\vert x\geq i]),
\end{equation}
which, for the uniform distribution of $x$, is equal to $\sum_{i=0}^{2^n-1}(2^n-i)/2^nh(1/(2^n-i)) = \sum_{i=2}^{2^n}(i\log(i)-(i-1)\log(i-1))/2^n=n$.

As one would expect, for $\eps>0$, the derivations above do not lead to any non-trivial lower bound, which is consistent with the results of Ref.~\cite{kushilevitz1996communication} that $\CC_\eps(\EQ_n)\in \LandauO(1)$ for $\eps>0$ (see \cref{app:EQ_proof} for details).
At the same time, one can show a stronger result than the one above: for the one-sided error, i.e., if we allow for errors when $f(x,y)=0$, but still require perfect output for $f(x,y)=1$, or the other way around, the communication complexity scales linearly with $n$ (see \cref{app:EQ_proof} for details).

\begin{example}
    $\CC_\eps(\kINT_n)\in \Omega(n-2k)$.
\end{example}
Finally, we consider the $k$-intersect function introduced in \cref{eq:def_kint}. 
To prove the lower bound for $k\in\Set{2,3,\dots, \lfloor n/2\rfloor}$, we take an ordering of the elements of $Y=\Set{0,1}^n$ that starts with $\Set{y\in Y\given \abs{y}=k}$, where $\abs{y}$ is the Hamming weight of $y$.
A general idea is to reduce the expression in the lower bound~\eqref{eq:ULB} to multiple sums of the form in \cref{eq:LB_index}, which are multiplied by probabilities, which sum up to a constant in $n$ and $k$ (details can be found in \cref{app:kINT}).

\begin{table}[t!]
    \begin{tabular}{ |c|c| } 
        \hline
        Function $f$  & Result \\
        \hline
        $\INDEX_n$ $\IP_n$, $\DISJ_n$ & $\CC_\eps(f)\in \Theta(n)$ \\ 
        $\mathrm{EQ}_{n}$ & $\CC_0(f) = n$\\ 
        $\kINT_{n}$ & $\CC_\eps(f)\in \Omega(n-2k)$ \\ 
        \hline
    \end{tabular}
    \caption{Summary of the communication complexity bounds implied by \cref{th:ULB} for examples considered in this paper.}
    \label{tab:cc_ex}
\end{table}

We summarize the results of applying \cref{th:ULB} to the considered examples in \cref{tab:cc_ex}.
For the index, inner product, and disjointness functions, the lower bounds are asymptotically optimal, because for each, there exists a protocol in which Alice sends her input string using $\LandauO(n)$ bits. 
For the equality function, it is also clear that exactly $n$ bits of communication suffice for Bob to guess with unit probability whether his input equals Alice's input.
We also expect that the lower bound for the $k$-intersect function should scale linearly with $n$ and decrease with $k$.

Apart from choosing an ordering of $Y$, applying \cref{th:ULB} to any given function is straightforward and can be automated. All that is required is calculating the conditional probabilities of the type $\Pr[f(x,y^i)=k_i\vert \bigwedge_{j=0}^{i-1}f(x,y^j)=k_j]$ for $k_0,k_1,\dots,k_i\in \Set{0,1}$.
The joint distribution of $g$ and $f(x,y^i)$ is then determined by the condition of \cref{th:ULB}.
A computer program that computes the lower bound in \cref{eq:ULB} for a given function and given $n$ is available on GitHub~\cite{github_code}.

\subsection*{Bounding the set of quantum correlations}
So far, we have applied \cref{th:ULB} to draw predictions about the communication required to distributively compute functions in a physical theory satisfying the information axioms. 
However, we can also apply \cref{th:ULB} to bound the strength of nonlocal correlations in Bell experiments, similarly to the way the original \ac{IC} statement has been used~\cite{pawlowski2009information,allcock2009,miklin2021information,gachechiladze2022quantum,jain2024informationcausality,chaves2015information}.
Here, we argue that \cref{th:ULB} implies the principle of non-trivial communication complexity~\cite{brassard2006limit}, which has been previously shown to put constraints on the strength of correlations in Bell experiments~\cite{brassard2006limit,brunner2009nonlocality,botteron2024extending,botteron2024algebra,botteron2024communication}. 
Following Ref.~\cite{buhrmann2010nonlocality}, we say that communication complexity in a physical theory is trivial if for any function $f:X\times Y\to \Set{0,1}$, there exists $\eps\in [0,\frac12)$ independent of $n=\log(\abs{X})$, such that $\CC_\eps(f)\leq 1$ for all $n$.
The principle of non-trivial communication complexity is the negation of that and states that there must exist functions such that for any $\eps<\frac12$ we choose, there would be $n$ for which the communication complexity is greater than one bit.
Note that the above definition can be generalized to any constant-size communication, i.e., not necessarily one bit.
As \cref{ex:INDEX} shows, \cref{th:ULB} implies $\CC_\eps(\INDEX_n)\in \Omega(n)$ for any $\eps\in [1,\frac12)$, and, as a corollary, we can conclude the following.
\begin{observation}\label{res:ic_implies_ntcc}
	Information causality implies non-trivial communication complexity.
\end{observation}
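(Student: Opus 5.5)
We argue by contraposition: assuming communication complexity is trivial in a theory whose mutual information obeys axioms (i)--(iv), we derive a contradiction with \cref{th:ULB}. Trivial communication complexity means that for every function $f$ there is an $\eps\in[0,\frac12)$, independent of $n$, with $\CC_\eps(f)\leq 1$ for all $n$. We apply this to $f=\INDEX_n$. Then there is a fixed $\eps<\frac12$ such that, for every $n$, some one-bit protocol assisted by the resources of the theory computes $\INDEX_n$ with error at most $\eps$ on every input pair.

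The next step invokes \cref{th:ULB} in the specialized form derived in \cref{ex:INDEX}. Take the natural ordering of $Y=[n]$ and let $x$ be uniformly distributed. Independence of the bits $x_i$, together with the chain rule and non-negativity, reduces \cref{eq:ULB} to \cref{eq:LB_index}. Fano's inequality gives $I(g;x_i\vert y=i)\geq 1-h(1-\eps)=1-h(\eps)$ for each summand. Combining these,
\begin{equation*}
1\geq \CC_\eps(\INDEX_n)\geq n\bigl(1-h(\eps)\bigr).
\end{equation*}
Since $\eps<\frac12$ is fixed, $1-h(\eps)>0$ is a constant independent of $n$. Choosing $n>1/(1-h(\eps))$ therefore gives a contradiction. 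So $\INDEX_n$ witnesses non-trivial communication complexity for every $\eps<\frac12$.

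The only point needing care is logical rather than computational. \cref{th:ULB} must apply to any theory satisfying the axioms, not only to quantum theory. In other words, its proof in \cref{app:ULB_proof} must use only (i)--(iv), with the shared resource treated as an abstract system $B$ on which $\Phi$ acts. I would therefore check that the hypothesis ``information causality holds'' is interpreted exactly as the validity of \cref{eq:ULB} for all such protocols. I would also note that the argument extends verbatim to any constant communication $c$, since $c\geq n(1-h(\eps))$ fails for large $n$ in the same way.
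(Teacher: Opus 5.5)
Your proposal is correct and follows the paper's own argument. The paper obtains the observation as an immediate corollary of \cref{ex:INDEX}: \cref{th:ULB}, with the natural ordering of $Y$, uniform $x$, and Fano's inequality, yields $\CC_\eps(\INDEX_n)\geq n(1-h(\eps))$, which exceeds one bit (or any constant) for large enough $n$ whenever $\eps<\frac12$; your contrapositive phrasing and your remark that \cref{th:ULB} must rest only on axioms (i)--(iv) simply make that same reasoning explicit.
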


Overall, one can understand \cref{th:ULB} as a unification of the two principles and also their extension.
In particular, \cref{th:ULB} fine-grains the principle of non-trivial communication complexity to inputs $x$ and $y$ of finite length.
As for the information causality principle, \cref{th:ULB} offers more choice in terms of the functions that the parties compute distributively. 
In fact, for the given cardinalities of the sets $X$ and $Y$, one can exhaustively check all Boolean functions $f:X\times Y\to \Set{0,1}$.
As we have seen in \cref{ex:INDEX,ex:IP,ex:DISJ}, some seemingly unrelated functions lead to the same expression in terms of the mutual information, thus \cref{th:ULB} induces a partition on the space of distributed functions.
Therefore, one can talk about equivalence classes of functions and compare their relative strength in bounding the set of quantum correlations.
In \cref{app:classification}, we provide such classifications for functions with $X=Y=\Set{0,1}^2$.

\section{Discussions and outlook}
In this work, we demonstrate that an extension of the information causality principle implies a lower bound on one-way communication complexity, applicable to both entanglement-assisted classical communication and quantum communication via the teleportation protocol.
We illustrate the effectiveness of this method through several examples, showing that it can yield tight lower bounds.
Importantly, the resulting lower bounds are theory-independent, since they rely solely on the axioms for mutual information.
From the point of view of bounding the set of quantum correlations in Bell experiments, we show that the proposed extended information causality principle is stronger than that of non-trivial communication complexity.

The present study opens up several avenues for future research at both the fundamental and technical levels.
First, it would be natural to try to extend the framework from one-way to two-way communication. It is known that for certain functions, such as the index and disjointness functions, the two cases differ significantly~\cite{buhrman1998quantum,aaronson2003}. 
If the two-way version of the universal lower bound can still be obtained in a theory-independent manner, it would provide further generalizations of the information causality principle.

We have observed in \cref{ex:INDEX,ex:IP,ex:DISJ}, that different functions may allow for the same expression of the bound in \cref{eq:ULB}.
At the same time, the number of \ac{PR}-boxes required to distributively compute their value using one bit of communication, following the construction of van Dam~\cite{vandam2013implausible}, can be different (see \cref{app:convolution} for details).
A preliminary analysis suggests that this has implications for the relative strength of constraints one can derive on the set of quantum correlations from \cref{th:ULB}, with functions requiring less number of \ac{PR}-boxes leading to tighter bounds (see \cref{app:convolution} for details).
However, we believe that this research question deserves a separate study.

On a more technical level, several open questions remain. First, how can the bound given by \cref{th:ULB} be calculated or approximated more efficiently for a given function and input size $n$? 
Second, how can one determine the optimal ordering of Bob's inputs in \cref{eq:ULB} for a given function? 
It would also be interesting to explore applications of the lower bound to other functions, as well as to extend the classification of the lower bounds for Boolean functions to input sizes larger than two.

\acknowledgements
    We thank Miguel Navascu\'es, Lucas Vieira, Simon Pfeiffer, and Anna Schroeder for useful discussions.
    This research was funded by the Fujitsu Germany GmbH as part of the endowed professorship ``Quantum Inspired and Quantum Optimization''.
    This project was funded within the QuantERA II Programme that has received funding from the EU's H2020 research and innovation programme under the GA No 101017733.
\newpage 
\onecolumngrid

\section*{Appendix}
\appendix
\crefalias{section}{appendix}
In this Appendix, we give technical details that support the claims in the main text. 
In~\cref{app:ULB_proof}, we provide a proof of \cref{th:ULB}.
In~\cref{app:EQ_proof}, and \cref{app:kINT} we give details on the application of \cref{th:ULB} to the equality and $k$-intersect functions, respectively.
In~\cref{app:classification}, we provide classification of all Boolean functions $f:X\times Y\to\Set{0,1}$ with equivalent extended \ac{IC} statements for $X=Y=\Set{0,1}^2$.
In~\cref{app:convolution}, we discuss how different functions from the same class may impose constraints of different relative strength on the set of quantum correlations.

\section{Proof of the extended \ac{IC} statement}\label{app:ULB_proof}
We start by restating the result.
\ULB*
\begin{proof}
The proof proceeds along the same lines as the proof of the Information Causality principle~\cite{pawlowski2009information} (see also Ref.~\cite{jain2024informationcausality}).
Our starting point is the following set of axioms:
\begin{enumerate}
	\item[(i)] non-negativity: $I(A;B\vert C)\geq 0$,
	\item[(ii)] chain rule: $I(A;B\vert C) = I(A;B,C)-I(A;C)$,
	\item[(iii)] data processing: $I(A;B\vert C) \geq I(\Phi(A);B,C)$, for a physical map $\Phi$,
	\item[(iv)] and consistency with Shannon mutual information in case $A,B$ and $C$ are (classical) random variables,
\end{enumerate}
which we assume to hold for a measure of conditional mutual information $I(A;B\vert C)$ for subsystems $A,B$, and $C$, which we can think of as parts of a quantum system being in some state $\rho_{A,B,C}$.

Let $x$ and $y$ be random variables taking values in $X$ and $Y$, representing Alice's and Bob's inputs, respectively (see \cref{fig:scenario}).
Let the parties share a source of correlations (e.g., a quantum state $\rho_{A,B}$) with its parts denoted as $A$ and $B$.
In the entanglement-assisted one-way communication scenario, Alice sends Bob a classical message, which we denote as $M$, which is a random variable that depends on $A$ and $x$.
Bob's guess $g$ is a random variable that depends on $y$, $M$, and $B$.

The derivation consists of re-writing the finding an upper and a lower bound on a quantity $I(\Set{f(x,y^i)}_{i=0}^{\abs{Y}-1};M,B)$, for an ordering $y^0,y^1,\dots,y^{\abs{Y}}-1$ of the elements of $Y$. 
We start by rewriting this quantity using the chain rule,
    \begin{align}\label{eq:ULB_der_1}
        I(\Set{f(x,y^i)}_{i=0}^{\abs{Y}-1};M,B)&=I(f(x,y^0);M,B)+I(\Set{f(x,y^i)}_{i=1}^{\abs{Y}-1};M,B\vert f(x,y^0))\\
        &=I(f(x,y^0);M,B)+I(f(x,y^1);M,B\vert f(x,y^0))+I(\Set{f(x,y^i)}_{i=2}^{\abs{Y}-1};M,B\vert \Set{f(x,y^i)}_{i=0}^1)\nonumber\\
        &=\dots=\sum_{i=0}^{\abs{Y}-1}I(f(x,y^i);M,B\vert \Set{f(x,y^j)}_{j=0}^{i-1})=\sum_{i=0}^{\abs{Y}-1}I(f(x,y^i);M,B\vert \Set{f(x,y^j)}_{j=0}^{i-1},y=y^i).\nonumber
    \end{align}

In the last step in \cref{eq:ULB_der_1}, we have added conditioning on a particular value of $y$ in each term. 
We could do that because all the variables in the mutual information are independent of the random variable $y$: $B$ is independent of $y$, because it is a part of the shared system and not the result of a measurement of Bob, 
$f(x,y^i)$ is a function of $x$ only, and $M$ is independent of $y$ due to the non-signalling condition. 
Now, we apply the data processing inequality for each summand of the final expression in \cref{eq:ULB_der_1} for a map $\Phi_{y_i}(M,B)=g$, which can be different for each $y^i$.  
As a result, we obtain that the quantity $I(\Set{f(x,y^i)}_{i=0}^{\abs{Y}-1};M,B)$ is lower-bounded by the expression on the right-hand side of \cref{eq:ULB}.

To find an upper bound, we use the data processing inequality for a map $\Phi(x)=\Set{f(x,y^i)}_{i=0}^{\abs{Y}-1}$, obtaining
\begin{equation}\label{eq:ULB_der_2}
    I(\Set{f(x,y^i)}_{i=0}^{\abs{Y}-1};M,B)\leq I(x;M,B).
\end{equation}
Now, we apply the chain rule (twice) and the non-negativity of the mutual information to bound $I(x;M,B)$ from above as
\begin{equation}\label{eq:eq:ULB_der_3}
    I(x;M,B) = I(M;x,B)+I(x;B)-I(M;B)\leq I(M;x,B),    
\end{equation}
where $I(x;B)=0$ due to the non-signalling of the correlations.
Finally, we realize that the mutual information of the message $M$ with any other variable cannot exceed its size, which we denoted in the main text as $m$, i.e., we have established that
\begin{equation}\label{eq:ULB_der_3}
    m\geq \sum_{i=0}^{\abs{Y}-1}I(g;f(x,y^i)\vert \Set{f(x,y^j)}_{j=0}^{i-1},y=y^i).
\end{equation}
If we assume that $\Pr[g=f(i,j)\vert x=i,y=j]\geq 1-\eps$ for all $i\in X$ and $j\in Y$, as in the statement of the theorem, then the minimum value of $m$ in the definition of the communication complexity in \cref{eq:def_cc} must also satisfy~\cref{eq:ULB_der_3}, which completes the proof.
\end{proof}
A similar expression as in \cref{eq:ULB_der_3} has appeared previously in the context of quantum communication complexity in the Appendix of Ref.~\cite{zhang2011}, albeit without a proof.

\section{Details on the application of \cref{th:ULB} to the equality function}\label{app:EQ_proof}
First, let us first define a more general notion of the communication complexity for a function $f:X\times Y\to \Set{0,1}$ that differentiates between errors for the cases when the function's value is $0$ or $1$,
\begin{align}\label{eq:def_cc_asym}
	\CC_{(\epsi,\epsii)}(f)\coloneqq \min~& m\\
    \text{s.t.}~ & \Pr[g=0\vert x=i,y=j]\geq 1-\epsi,\; \text{for all}~ i,j,~ \text{s.t.~} f(i,j)=0,\nonumber\\
    & \Pr[g=1\vert x=i,y=j]\geq 1-\epsii,\; \text{for all}~ i,j,~ \text{s.t.~} f(i,j)=1,\nonumber
\end{align}
where the parameters $\epsi,\epsii\in [0,\frac12)$ are sometimes referred to as Type-I and Type-II errors.
Clearly, the lower bound in \cref{eq:ULB} also applies to $\CC_{(\epsi,\epsii)}$ if we substitute the condition on the guessing probability in the statement of \cref{th:ULB} with the one in \cref{eq:def_cc_asym}. 
It must also be clear that $\CC_0(f)\geq \CC_{(\epsi,0)}(f)$ and $\CC_0(f)\geq \CC_{(0,\epsii)}(f)$ for any function $f$.

For the equality function, we derive a stronger result than the one given in the main text. 
\begin{proposition} $\CC_{(0,\epsii)}(\EQ_n)\in \Omega(n)$, and $\CC_{(\epsi,0)}(\EQ_n)\in \Omega(n)$.
\end{proposition}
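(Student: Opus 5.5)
The plan is to apply \cref{th:ULB} in the asymmetric-error form noted above, with the natural ordering $0,1,\dots,2^n-1$ of $Y$, exactly as in \cref{ex:EQ}. For this ordering, conditioning on $\Set{\EQ_n(x,j)}_{j<i}$ is equivalent to knowing either the value of $x$ (if $x<i$) or only that $x\geq i$. In the first case the $i$-th term vanishes. In the second case $f(x,i)=[x=i]$ is a Bernoulli variable with parameter $p_i\coloneqq\Pr[x=i\vert x\geq i]$. The bound therefore becomes
\begin{equation*}
\CC_{(\epsi,\epsii)}(\EQ_n)\geq\sum_{i=0}^{2^n-1}\Pr[x\geq i]\,\phi(p_i),
\end{equation*}
where $\phi(p)$ is the mutual information between the input and output of the binary channel that maps $f\mapsto g$ with the prescribed one-sided error, for input prior $p$. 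Since mutual information is monotone in the channel's noise, I will use the worst-case channel allowed by the constraints in \cref{eq:def_cc_asym}. The input distribution of $x$ is free, because \cref{th:ULB} holds for any distribution.

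\emph{Case $(0,\epsii)$.} The channel is a Z-channel: $f=0$ is mapped to $g=0$ perfectly, and $f=1$ is mapped to $g=1$ with probability $1-\epsii$. For this channel
\begin{equation*}
\phi(p)=h\bigl(p(1-\epsii)\bigr)-p\,h(\epsii).
\end{equation*}
Take $x$ uniform, so that $\Pr[x\geq i]=(2^n-i)/2^n$ and $p_i=1/(2^n-i)$. Using $h(q)\geq q\log(1/q)$, each term is at least
\begin{equation*}
\frac{1}{2^n}\Bigl[(1-\epsii)\log\frac{2^n-i}{1-\epsii}-h(\epsii)\Bigr].
\end{equation*}
Summing over $i$ and using Stirling, $\frac{1}{2^n}\sum_{k=1}^{2^n}\log k=n-\LandauO(1)$, gives $(1-\epsii)n-\LandauO(1)\in\Omega(n)$. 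This step is routine.

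\emph{Case $(\epsi,0)$.} Now $f=1$ is mapped to $g=1$ perfectly, and $f=0$ is mapped to $g=1$ with probability $\epsi$. Then
\begin{equation*}
\phi(p)=h\bigl(p+(1-p)\epsi\bigr)-(1-p)\,h(\epsi),
\end{equation*}
which is only $\LandauO(p)$ for small $p$, since the $\log(1/p)$ gain of the previous case is lost. With the uniform distribution the sum is therefore $\LandauO(1)$, so the distribution of $x$ (and possibly the ordering of $Y$) must be changed. What I would try first is to choose the conditional probabilities $p_i$ directly, since they determine the distribution via $\Pr[x\geq i]=\prod_{j<i}(1-p_j)$, and then maximize $\sum_i\Pr[x\geq i]\phi(p_i)$. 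Two natural candidates are a constant $p_i=p^\ast$ maximizing $\phi(p)/p$, and a slowly decaying profile $p_i\sim 1/i$, which keeps the weights $\Pr[x\geq i]$ from dying off geometrically.

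\textbf{Main obstacle.} This optimization is the hard step, because of an unfavorable trade-off. For small $p_i$ one has $\phi(p_i)\approx c(\epsi)p_i$, and the weighted sum telescopes:
\begin{equation*}
\sum_i\Pr[x\geq i]\,p_i=\sum_i\Pr[x=i]\leq 1 .
\end{equation*}
For constant $p_i$ the weights decay geometrically, so the sum is again bounded. I therefore expect the natural-ordering analysis to give only $\LandauO(1)$ here, consistent with the constant-cost public-coin hashing protocol, which errs only when $x\neq y$. If no input distribution or ordering escapes the telescoping bound, I would conclude that the $\Omega(n)$ claim for $\CC_{(\epsi,0)}(\EQ_n)$ cannot come from \cref{th:ULB} in this form. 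In that event I would check whether the paper's labelling of Type-I and Type-II errors is what makes that case work.
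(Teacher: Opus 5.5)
Your treatment of the case $(0,\epsii)$ follows the paper's argument essentially verbatim. You use the natural ordering and uniform $x$, and minimize the Z-channel quantity $h\bigl(p_i(1-\epsii)\bigr)-p_i\,h(\epsii)$ at $\Pr[g=1\vert x=i,y=i]=1-\epsii$. The monotonicity you invoke is exactly the paper's check that $t\mapsto h(pt)-p\,h(t)$ is increasing. You then apply $h(q)\geq q\log(1/q)$ and Stirling. That half is fine.

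For $(\epsi,0)$ your obstacle is not a gap in your argument but an error in the statement, so state it as such rather than hedging.

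\begin{itemize}
\item \emph{Where the paper's proof fails.} The paper dismisses this case by saying it ``follows the same reasoning''. That reasoning breaks exactly where you found it does: the $\log(1/p_i)$ gain disappears.
\item \emph{The labelling is not the issue.} By \cref{eq:def_cc_asym}, $\epsi$ bounds the error on inputs with $\EQ_n(x,y)=0$, i.e.\ $x\neq y$. The main text also explicitly claims linear scaling when errors are allowed for $f(x,y)=0$ and output must be perfect for $f(x,y)=1$.
\item \emph{Your hashing remark is already a counterexample.} Identify $[2^n]$ with $\Set{0,1}^n$. Measuring shared maximally entangled pairs gives shared uniform $r^{(1)},\dots,r^{(k)}\in\Set{0,1}^n$. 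Alice sends the $k$ parities $r^{(l)}\cdot x \bmod 2$, and Bob outputs $g=1$ iff all of them match $r^{(l)}\cdot y$. Then $g=1$ with certainty when $x=y$, and with probability $2^{-k}$ when $x\neq y$. Taking $k=\lceil\log(1/\epsi)\rceil$ meets the constraints of \cref{eq:def_cc_asym}, so $\CC_{(\epsi,0)}(\EQ_n)\leq\lceil\log(1/\epsi)\rceil$ for every $n$. The second claim is therefore false.
\end{itemize}

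This also settles your open question about other priors and orderings.

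\begin{itemize}
\item \emph{Natural ordering, any prior.} Your $\phi$ with $\epsi=2^{-k}$ is precisely \cref{th:ULB} evaluated on the hashing protocol. Since $\phi$ is concave with $\phi(0)=0$ and $\phi'(0)=\log(1/\epsi)$, you get $\phi(p)\leq p\log(1/\epsi)$ for all $p\in[0,1]$. Your telescoping identity then gives $\sum_i\Pr[x\geq i]\,\phi(p_i)\leq\log(1/\epsi)$ for every prior, not just for small $p_i$. This covers your constant-$p_i$ and $p_i\sim 1/i$ candidates in one line.
\item \emph{Any ordering, any prior.} The bound obtainable from \cref{th:ULB} is a minimum over all $g$ consistent with the constraints. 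Applying \cref{th:ULB} to the hashing protocol bounds the right-hand side of \cref{eq:ULB} by its message length $k$. So no choice can extract more than $\lceil\log(1/\epsi)\rceil$.
\end{itemize}

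The correct conclusion is that only $\CC_{(0,\epsii)}(\EQ_n)\in\Omega(n)$ holds, and the second half of the proposition should be dropped.
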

\begin{proof}
Let us take the natural ordering $0,1,\dots,2^n-1$ of the elements of $Y=[2^n]$ in \cref{th:ULB}.
We keep the first term in the sum on the right-hand side of \cref{eq:ULB} simply as is, $I(g;f(x,0)\vert y=0)$.
All other terms for $i\in \Set{1,2,\dots 2^n-1}$ we expand as
\begin{equation}\label{eq:LB_EQ_dev_1}
    I(g;f(x,i)\vert \Set{f(x,j)}_{j=0}^{i-1},y=i) = \sum_{z\in\Set{0,1}^i}\Pr[\bigwedge_{j=0}^{i-1} f(x,j)=z_j]I(g;f(x,i)\vert \bigwedge_{j=0}^{i-1} f(x,j)=z_j,y=i),
\end{equation}
where $z_j$ for $j\in \Set{0,1,\dots i-1}$ are the elements of $z$. 
Evidently, whenever $\sum_{j=0}^{i-1}z_j>1$, the probability in~\cref{eq:LB_EQ_dev_1} is $0$.
Whenever $\sum_{j=0}^{i-1}z_j=1$, we know that $f(x,i)=0$, and, hence the mutual information in~\cref{eq:LB_EQ_dev_1} is $0$. 
Therefore, the only term that survives in the above sum is the one for which $z_j=0$ for all $j\in [i]$.
Notice, that we can conveniently represent the condition $\bigwedge_{j=0}^{i-1} f(x,j)=0$ as $x\geq i$, which leads to the following lower bound
\begin{equation}\label{eq:LB_EQ_dev_2}
    \CC_{(\epsi,\epsii)}(\EQ_n)\geq \sum_{i=0}^{2^n-2}\Pr[x\geq i]I(g;[x=i]\vert x\geq i,y=i),
\end{equation}
where, of course, $\Pr[x\geq 0]I(g;[x=0]\vert x\geq 0, y=0)=I(g;[x=0]\vert y=0)$.
Notice that we set the limit of the sum in \cref{eq:LB_EQ_dev_2} to $2^n-2$, because for $i=2^n-1$, the condition $x\geq i$ is equivalent to $x=2^n-1$, and the function's value is deterministic.
Taking the uniform distribution of $x\in[2^n]$, we get $\Pr[x\geq i] = \frac{2^n-i}{2^n}$.
We express the mutual information terms for $i\in\Set{0,1,\dots,2^n-2}$ in \cref{eq:LB_EQ_dev_2} in terms of the binary entropy as
\begin{multline}\label{eq:LB_EQ_dev_3}
    I(g;[x=i]\vert x\geq i, y=i) = h(\Pr[g=0\vert x\geq i,y=i])-\Pr[x=i\vert x\geq i]h(\Pr[g=0\vert x=i,y=i])\\
    -\Pr[x\geq i+1\vert x\geq i]h(\Pr[g=0\vert x\geq i+1,y=i]).
\end{multline}
The probability $\Pr[g=0\vert x\geq i,y=i]$ for $i\in\Set{0,1,\dots,2^n-2}$ can be expressed as
\begin{equation}\label{eq:LB_EQ_dev_4}
\begin{split}
    \Pr[g=0\vert x\geq i,y=i] & = \Pr[g=0,x=i\vert x\geq i,y=i]+\Pr[g=0,x\geq i+1\vert x\geq i,y=i] \\
    & = \Pr[x=i\vert x\geq i]\Pr[g=0\vert x=i,y=i]+\Pr[x\geq i+1\vert x\geq i]\Pr[g=0\vert x\geq i+1,y=i].
\end{split}
\end{equation}
Therefore, the lower bound in \cref{eq:ULB} on the communication complexity of $\EQ_n$, for the chosen ordering of the elements of $Y$ is completely determined by the two types of probabilities $\Pr[g=0\vert x=i,y=i]$ and $\Pr[g=0\vert x\geq i+1,y=i]$, together with the probabilities involving only variable $x$, i.e., $\Pr[x=i\vert x\geq i]=\frac{1}{2^n-i}$, and $\Pr[x\geq i+1\vert x\geq i]=1-\frac{1}{2^n-i}$.

Now we consider the case of $\epsi=0$.
From the condition in \cref{eq:def_cc_asym}, we get that $\Pr[g=0\vert x\geq i+1,y=i]=1$, for all $i\in [2^n-1]$.
Thus, the mutual information in \cref{eq:LB_EQ_dev_3} takes the form
\begin{equation}
    I(g;[x=i]\vert x\geq i, y=i) = h\left(\frac{1}{2^n-i}\Pr[g=1\vert x=i,y=i]\right)-\frac{1}{2^n-i}h\left(\Pr[g=1\vert x=i,y=i]\right).
\end{equation}
It is relatively easy to see that the above expression as a function of $\Pr[g=1\vert x=i,y=i]$ is monotonically increasing.
Since from \cref{eq:def_cc_asym} we know that $\Pr[g=1\vert x=i,y=i]\geq 1-\epsii$ for all $i\in[2^n-1]$, we can conclude that 
\begin{equation}
    I(g;[x=i]\vert x\geq i, y=i)\geq h\left(\frac{1-\epsii}{2^n-i}\right)-\frac{h\left(\epsii\right)}{2^n-i}.
\end{equation}
Inserting this back into the lower bound in \cref{eq:LB_EQ_dev_2}, we obtain
\begin{equation}\label{eq:LB_EQ_dev_5}
    \CC_{(0,\epsii)}(\EQ_n)\geq \sum_{i=0}^{2^n-2}\frac{2^n-i}{2^n}h\left(\frac{1-\epsii}{2^n-i}\right)-\frac{2^n-1}{2^n}h\left(\epsii\right).
\end{equation}
As we are interested in the asymptotic behaviour, we can ignore the constant terms, even if they are negative.
From the above sum, we can take only the first summand of the binary entropy function and obtain
\begin{equation}
    \CC_{(0,\epsii)}(\EQ_n)\geq \frac{1-\epsii}{2^n}\sum_{i=0}^{2^n-2}\log(2^n-i)-\LandauO(1)=\frac{1-\epsii}{2^n}\log(2^n!)-\LandauO(1)\geq (1-\epsii)\log(2^n)-\LandauO(1)\in \Omega(n).
\end{equation}
The other case of $\epsii=0$ follows the same reasoning and also leads to the linear lower bound.
\end{proof}

Next, we show that if we continued the derivation above for $\epsi>0$, we would not obtain any non-trivial lower bound on the communication complexity, which is consistent with the results of Ref.~\cite{kushilevitz1996communication}. To show this, we consider the symmetric case of $\epsi=\epsii=\eps$ for simplicity.
We also take the equality in the definition of communication complexity in \cref{eq:def_cc}, i.e., we assume $\Pr[g=f(i,j)\vert x=i,y=j]=1-\eps$, $\forall i\in X,j\in Y$, because the mutual information in \cref{eq:LB_EQ_dev_3} is maximal in that case.

\begin{proposition}
    Let $x$ and $y$ take values in $[2^n]$ and $x$ be uniformly distributed. Let $\Pr[g=[i=j]\vert x=i,y=j]=1-\eps$, for all $i,j\in [2^n]$ and $\eps\in (0,\frac12)$.
    Then
    \begin{equation}\label{eq:LB_EQ_dev_7}
        \sum_{i=0}^{2^n-2}\Pr[x\geq i]I(g;[x=i]\vert x\geq i,y=i)\in \LandauO(1).
    \end{equation}
\end{proposition}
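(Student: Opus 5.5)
The plan is to compute each summand in \cref{eq:LB_EQ_dev_7} exactly, using the decomposition already derived in \cref{eq:LB_EQ_dev_3,eq:LB_EQ_dev_4}, and then bound it with a single concavity (tangent-line) estimate of the binary entropy.

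\textbf{Step 1: specialise the probabilities.} Under the hypothesis $\Pr[g=[i=j]\vert x=i,y=j]=1-\eps$ we have $\Pr[g=0\vert x=i,y=i]=\eps$. For every $x'\geq i+1$ we have $\EQ_n(x',i)=0$, so $\Pr[g=0\vert x=x',y=i]=1-\eps$. Averaging over $x'$ then gives $\Pr[g=0\vert x\geq i+1,y=i]=1-\eps$ as well. Write $p_i\coloneqq\Pr[x=i\vert x\geq i]=\frac{1}{2^n-i}$. By \cref{eq:LB_EQ_dev_4},
\begin{equation*}
q_i\coloneqq\Pr[g=0\vert x\geq i,y=i]=p_i\eps+(1-p_i)(1-\eps)=(1-\eps)-p_i(1-2\eps).
\end{equation*}
Since $h(\eps)=h(1-\eps)$, \cref{eq:LB_EQ_dev_3} collapses to
\begin{equation*}
I(g;[x=i]\vert x\geq i,y=i)=h(q_i)-h(1-\eps).
\end{equation*}

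\textbf{Step 2: tangent-line bound.} The binary entropy is concave with derivative $h'(p)=\log\frac{1-p}{p}$. Hence
\begin{equation*}
h(q_i)\leq h(1-\eps)+h'(1-\eps)(q_i-(1-\eps)).
\end{equation*}
Here $h'(1-\eps)=\log\frac{\eps}{1-\eps}$ and $q_i-(1-\eps)=-p_i(1-2\eps)$. This yields
\begin{equation*}
I(g;[x=i]\vert x\geq i,y=i)\leq p_i(1-2\eps)\log\frac{1-\eps}{\eps}.
\end{equation*}
The right-hand side is finite precisely because $\eps>0$. This is the only place the assumption $\eps\in(0,\frac12)$ enters, and it is also where the argument for $\eps=0$ (the deterministic case of \cref{ex:EQ}) breaks down.

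\textbf{Step 3: sum the bounds.} For uniform $x$ we have $\Pr[x\geq i]=\frac{2^n-i}{2^n}=\frac{1}{2^n p_i}$. Each weighted summand is therefore at most $\frac{1}{2^n}(1-2\eps)\log\frac{1-\eps}{\eps}$, since the factor $p_i$ cancels. Summing the $2^n-1$ terms gives the upper bound $(1-2\eps)\log\frac{1-\eps}{\eps}$, which is independent of $n$. The sum is nonnegative by axiom (i), so it lies in $\LandauO(1)$.

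The main obstacle I expect is Step 2. A naive bound such as $h(q_i)-h(1-\eps)\leq h(p_i)$ would give a weighted summand of order $\frac{1}{2^n p_i}h(p_i)\approx\frac{1}{2^n}\log\frac{1}{p_i}$, and summing these reproduces the $\log(2^n!)/2^n\approx n$ growth seen in \cref{ex:EQ}. The cancellation of $p_i$ against $\Pr[x\geq i]$ is what keeps the sum bounded. That cancellation only appears once one uses that the deviation of $q_i$ from $1-\eps$ is linear in $p_i$ and that $h$ has bounded slope away from the endpoints, which is exactly what the tangent-line bound exploits.
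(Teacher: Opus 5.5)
Your proposal is correct and follows the paper's proof essentially step for step. You use the same specialisation of \cref{eq:LB_EQ_dev_3,eq:LB_EQ_dev_4} and the same linear bound on the binary entropy: your tangent-line estimate at $1-\eps$ is the paper's KL-nonnegativity bound at $\eps$, via $h(p)=h(1-p)$. The factor $\frac{1}{2^n-i}$ then cancels against $\Pr[x\geq i]$ in the same way, giving the $n$-independent bound $(1-2\eps)\log\frac{1-\eps}{\eps}$.
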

\begin{proof}
Let us denote the expression on the left-hand side of \cref{eq:LB_EQ_dev_7} as $I$.
Inserting $\Pr[g=[i=j]\vert x=i,y=j]=1-\eps$ in \cref{eq:LB_EQ_dev_4,eq:LB_EQ_dev_3}, we obtain
\begin{equation}
    \Pr[g=0\vert x\geq i,y=i] = \frac{\eps}{2^n-i}+\left(1-\frac{1}{2^n-i}\right)(1-\eps)=1-\eps-\frac{1-2\eps}{2^n-i},
\end{equation}
and
\begin{equation}
    I(g;[x=i]\vert x\geq i,y=i) = h\left(\eps+\frac{1-2\eps}{2^n-i}\right)-h\left(\eps\right).
\end{equation}
Thus, the resulting sum in \cref{eq:LB_EQ_dev_7} becomes
\begin{equation}\label{eq:LB_EQ_dev_8}
    I=\frac{1}{2^n}\sum_{i=2}^{2^n}i\left(h\left(\eps+\frac{1-2\eps}{i}\right)-h\left(\eps\right)\right),
\end{equation}
where we changed the index of summation from $i$ to $2^n-i$.
For each summand in \cref{eq:LB_EQ_dev_8}, we can bound the first binary entropy term as
\begin{equation}
    h\left(\eps+\frac{1-2\eps}{i}\right) \leq h\left(\eps\right)+\frac{1-2\eps}{i}\log\left(\frac{1-\eps}{\eps}\right),
\end{equation}
which follows from the non-negativity of the Kullback–Leibler divergence.
We can, therefore, conclude that 
$ I\leq (1-2\eps)\log\left(\frac{1-\eps}{\eps}\right),$
which completes the proof.
\end{proof}

\section{Details on the application of \cref{th:ULB} to the $k$-intersect function}
\label{app:kINT}
\begin{proposition}\label{prop:kINT}
    $\CC_\eps(\kINT_n)\in \Omega(n-2k)$, for $k\in\Set{1,2,\dots,\lfloor n/2\rfloor}$.
\end{proposition}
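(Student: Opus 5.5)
The plan is to reduce to the index-type bound \eqref{eq:LB_index} by a good choice of the distribution of $x$ and of the first elements in the ordering of $Y$. \Cref{th:ULB} holds for \emph{any} distribution of Alice's input $x$, because the proof in \cref{app:ULB_proof} only uses $I(x;B)=0$ and $I(M;x,B)\leq m$. The accuracy constraint $\Pr[g=f(i,j)\vert x=i,y=j]\geq 1-\eps$ is imposed pointwise for every pair $(i,j)$, so it continues to hold after any reweighting of $x$. We may therefore pick a distribution of $x$ that makes the bound easy to evaluate.

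Concretely, I would take $x$ uniformly distributed on the set of strings with $x_0=x_1=\dots=x_{k-2}=1$, with the remaining $n-k+1$ bits $x_{k-1},\dots,x_{n-1}$ i.i.d.\ uniform. For Bob's ordering, I would start with the $n-k+1$ strings
\[
y^j = 1^{k-1}\,0\cdots 0 \oplus e_{k-1+j},\qquad j=0,1,\dots,n-k,
\]
i.e., the indicator of the set $\{0,\dots,k-2\}\cup\{k-1+j\}$. These have Hamming weight $k$, consistent with an ordering that begins with the weight-$k$ strings, and the rest of $Y$ may follow in any order. On the support of $x$ we have $\sum_i x_i y^j_i=(k-1)+x_{k-1+j}$, hence
\[
\kINT_n(x,y^j)=x_{k-1+j}.
\]
So the first $n-k+1$ values $f(x,y^j)$ are independent uniform bits. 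The terms with index beyond $n-k$ are dropped by non-negativity (axiom (i)). The conditioning on earlier values in the first $n-k+1$ terms is removed exactly as in \cref{ex:INDEX}, using independence and the chain rule. This leaves
\[
\CC_\eps(\kINT_n)\geq \sum_{j=0}^{n-k} I(g;x_{k-1+j}\vert y=y^j).
\]

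Each summand is then bounded by Fano's inequality as in \cref{ex:INDEX}: $\Pr[g=x_{k-1+j}\vert y=y^j]\geq 1-\eps$ holds because it is an average of the pointwise constraints. This gives $\CC_\eps(\kINT_n)\geq (n-k+1)(1-h(\eps))$, which lies in $\Omega(n-k)\subseteq\Omega(n-2k)$ for $k\leq\lfloor n/2\rfloor$, with room to spare.

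The main point to verify carefully is the legitimacy of the non-uniform input distribution, namely that nothing in \cref{th:ULB} or its proof secretly assumes $x$ uniform on $\{0,1\}^n$. I expect this to be fine by inspection of \cref{eq:ULB_der_1,eq:ULB_der_2,eq:eq:ULB_der_3}. If one insists on uniform $x$, the fallback is to order all weight-$k$ strings $y$ grouped by their $(k-1)$-subset $S$ of "anchor" positions. One then expands each conditional mutual information over the conditioning events, as in \cref{eq:LB_EQ_dev_1}. The events "$x_S=1^{k-1}$ and exactly $k-1$ earlier overlaps" each contribute an index-type sum over the positions outside $S$, weighted by its probability. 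The hard step in that route is showing that these probabilities add up to a constant independent of $n$. That bookkeeping is where the weaker $n-2k$ would naturally appear.
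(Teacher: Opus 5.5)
Your argument is correct, and it takes a genuinely different and much shorter route than the paper. Your first block $y^0,\dots,y^{n-k}$ is the same as the paper's first column (\cref{tab:kINT_orderings}); the difference is the input distribution.

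The paper keeps $x$ uniform on $\{0,1\}^n$. Then $f(x,y^i)=[\abs{x_{0:k-2}}+x_{k+i-1}=k]$ is not a bare bit of $x$, so the paper has to:
\begin{itemize}
\item split each conditional mutual information over the events $\abs{x_{0:k-1}}=k$ versus $<k$;
\item repeat this over further blocks of weight-$k$ strings, collecting index-type sums weighted by $2^{-(k+i)}$;
\item prove $\sum_{i=k}^{n-1}2^{-i}\binom{i-1}{k-1}(n-i)\geq n-2k$ by two inductions.
\end{itemize}

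You instead use the fact that \cref{th:ULB} and its proof hold for any distribution of $x$ independent of $y$. The error constraint is worst-case, and the proof only uses $I(x;B)=0$, independence of $y$, data processing, and $H(M)\leq m$. You then concentrate $x$ on the face $x_{0:k-2}=1^{k-1}$. This is exactly the reduction from $\INDEX_{n-k+1}$ to $\kINT_n$, phrased inside the theorem. It lets the first block alone yield $(n-k+1)(1-h(\eps))$.

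This gives a strictly stronger result. It is $\Omega(n)$ uniformly for all $k\leq\lfloor n/2\rfloor$, whereas the paper's stated bound $n-2k$ becomes vacuous at $k=n/2$. It also agrees with the linear-in-$n$, decreasing-in-$k$ behaviour the paper only says it expects. What the paper's computation offers in return is an evaluation of \cref{eq:ULB} for the fixed uniform distribution. That illustrates the splitting technique an automated evaluation would follow for functions with no clean embedding of $\INDEX$.

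One point of presentation: your distribution does not have full support, so $\Pr[g=f(i,j)\vert x=i,y=j]$ is undefined for $i$ outside it. Read the condition in \cref{th:ULB} as a property of the protocol, i.e.\ of the conditional law of $g$ given the inputs. Alternatively, state the argument directly as an explicit reduction to \cref{ex:INDEX}, which sidesteps the issue.
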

\begin{proof}
To simplify the notation, let us denote by $x_{i:j} \coloneqq x_ix_{i+1}\dots x_j$ for $0\leq i\leq j\leq n-1$ the substring of $x\in \Set{0,1}^n$ of length $j-i+1$, and similarly for $y$.
We use the ordering of the elements of $Y$, such that the first $n\choose k$ of them have the Hamming weight $\abs{y}=k$.
In fact, we only look at such terms with $\abs{y}=k$ in the sum in \cref{eq:ULB}, as we can always ignore the rest.

Within the $n\choose k$ elements of $Y$, first we consider $n-k+1$ of them $y^0,y^1,\dots, y^{n-k}$, with $\abs{y^i_{0:k-2}}=k-1$, and $y^i_{k+i-1}=1$, i.e., the strings that start with $k-1$ bits being $1$ and have another bit equal to $1$ at the $(k+i-1)$-th position, for $i\in \Set{0,1,\dots,n-k}$. For these $n-k+1$ elements of $Y$, the function $f(x,y)=\kINT_n(x,y)$ takes the form $f(x,y^i)=[\sum_{j=0}^{n-1}x_j\cdot y^i_j\geq k] = [\abs{x_{0:k-2}}+x_{k+i-1}=k]$, and we obtain the following expression for the corresponding part of the lower bound in \cref{eq:ULB},
\begin{equation}\begin{split}\label{eq:kINT_der_1}
    \sum_{i=0}^{n-k}I(g;f(x,y^i)\vert \Set{f(x,y^j)}_{j=0}^{i-1},y=y^i) &\geq \sum_{i=1}^{n-k}I(g;f(x,y^i)\vert [\abs{x_{0:k-1}}=k], \Set{f(x,y^j)}_{j=1}^{i-1},y=y^i)\\
    &= \Pr[\abs{x_{0:k-1}}=k]\sum_{i=1}^{n-k}I(g;f(x,y^i)\vert \abs{x_{0:k-1}}=k, \Set{f(x,y^j)}_{j=1}^{i-1},y=y^i) \\
    &+ \Pr[\abs{x_{0:k-1}}<k]\sum_{i=1}^{n-k}I(g;f(x,y^i)\vert \abs{x_{0:k-1}}<k, \Set{f(x,y^j)}_{j=1}^{i-1},y=y^i),
    \end{split}
\end{equation}
where we ignored the first term $I(g;[\abs{x_{0:k-1}}=k]\vert y=y^0)$.
Notice that the condition $\abs{x_{0:k-1}}=k$ implies $\abs{x_{0:k-2}}=k-1$, and hence, for the mutual information conditioned on $\abs{x_{0:k-1}}=k$ the function $f(x,y^i) = x_{k+i-1}$, for $i\in \Set{1,2,\dots, n-k}$. 
Therefore, the sum multiplied by the probability $\Pr[\abs{x_{0:k-1}}=k]=\frac{1}{2^k}$ in \cref{eq:kINT_der_1} takes the form
\begin{equation}\label{eq:kINT_der_2}
    \sum_{i=1}^{n-k}I(g;x_{k+i-1}\vert \abs{x_{0:k-1}}=k,\Set{x_{k+j-1}}_{j=1}^{i-1},y=y^i)\geq \sum_{i=1}^{n-k}I(g;x_{k+i-1}\vert y=y^i)\geq (n-k)\left(1-h\left(\eps\right)\right),
\end{equation}
where the conditioning can be removed due to the independence of the bits of $x$.
Now we deal with the sum in \cref{eq:kINT_der_1} that is multiplied by $\Pr[\abs{x_{0:k-1}}<k]$, which we call the second summand. 
We open it up similarly to \cref{eq:kINT_der_1}, but now with respect to the value of the function $f(x,y^1)$.
The resulting expression is
\begin{equation}\begin{split}\label{eq:kINT_der_3}
    &\Pr[\abs{x_{0:k-2}}+x_k=k\vert \abs{x_{0:k-1}}<k]\sum_{i=2}^{n-k}I(g;f(x,y^i)\vert \abs{x_{0:k-1}}<k, \abs{x_{0:k-2}}+x_k=k, \Set{f(x,y^j)}_{j=2}^{i-1},y=y^i)\\
    +&\Pr[\abs{x_{0:k-2}}+x_k<k\vert \abs{x_{0:k-1}}<k]\sum_{i=2}^{n-k}I(g;f(x,y^i)\vert \abs{x_{0:k-1}}<k, \abs{x_{0:k-2}}+x_k<k, \Set{f(x,y^j)}_{j=2}^{i-1},y=y^i).
    \end{split}
\end{equation}
Perhaps the reader can already anticipate that the first term in \cref{eq:kINT_der_3} simplifies in a similar manner to \cref{eq:kINT_der_2}. We call this step of separating the second summand a \textit{splitting} into a lower-bound and the remaining exact terms.  
In \cref{eq:kINT_der_3} , due to the condition $\abs{x_{0:k-2}}+x_k=k$ in the mutual information, we again obtain that $f(x,y^i)=x_{k+i-1}$ for $i\in\Set{2,3,\dots,n-k}$.
The only difference with \cref{eq:kINT_der_2} is that the sum starts at $i=2$ and, hence, the respective lower bound is $(n-k-1)(1-h(\eps))$.
The factor by which this sum is multiplied is 
\begin{equation}
    \Pr[\abs{x_{0:k-2}}+x_k=k\vert \abs{x_{0:k-1}}<k]\Pr[\abs{x_{0:k-1}}<k] = \Pr[\abs{x_{0:k-2}}+x_k=k\wedge \abs{x_{0:k-1}}<k]=\frac{1}{2^{k+1}}.
\end{equation}
Clearly, the above procedure can be repeated for $y^2,y^3,\dots,y^{n-k-1}$, with the last sum consisting of just one term corresponding to $y=y^{n-k}$ and lower-bounded by $1-h(\eps)$, multiplied by the probability $\Pr[\abs{x_{0:k-2}}+x_{n-2}=k \bigwedge_{i=k-1}^{n-3}\abs{x_{0:k-2}} + x_i <k] = \frac{1}{2^{n-1}}$.
Now, notice that if the value of the function $f(x,y^{n-k})=1$, i.e., if $\abs{x_{0:k-2}}+x_{n-1}=k$, then together with the conditions $\abs{x_{0:k-2}}+x_i$ for $i\in\Set{k-1,k,\dots,n-2}$ they single-out exactly one string $x$, the one with bits from $0$ to $k-2$, and $x_{n-1}$ being equal to $1$ and the rest being $0$.
This means that any mutual information term in \cref{eq:ULB}, conditioned on this configuration, is going to be $0$, because the value of the function is deterministic in $x$ and $y$.
Therefore, we can intermediately conclude the following lower bound
\begin{equation}\begin{split}\label{eq:kINT_der_4}
    \CC_\eps(\kINT_n)&\geq \left(1-h\left(\eps\right)\right)\sum_{i=0}^{n-k-1}\frac{1}{2^{k+i}}(n-k-i)\\
    +&\Pr[\abs{x_{0:k-2}}<k-1]\sum_{i=n-k+1}^{2^n-1}I(g;f(x,y^i)\vert \abs{x_{0:k-2}}<k-1, \Set{f(x,y^j)}_{j=n-k+1}^{i-1},y=y^i),
\end{split}
\end{equation}
where we have substituted the condition $\bigwedge_{i=k-1}^{n-1} \abs{x_{0:k-2}}+x_i<k$ with $\abs{x_{0:k-2}}<k-1$, which are equivalent up to one string $x$ with $\abs{x}=\abs{x_{0:k-2}}=k-1$, that does not play any role in the future calculations.

\begin{table}[t!]
    \centering
    \begin{tabular}{c c}
        $y^0=$ & $\overbrace{11\dots 1}^{k-1}$ $\overbrace{10\dots 0}^{n-k+1}$  \\
        $y^1=$ & $11\dots 1$ $01\dots 0$ \\
        $\vdots$ & $\vdots$ \\
        $y^{n-k}=$ & $11\dots 1$ $00\dots 1$
    \end{tabular} \qquad
    \begin{tabular}{c c}
        $y^{n-k+1}=$ & $\overbrace{11\dots 1}^{k-2}$ $01$ $\overbrace{10\dots 0}^{n-k}$ \\
        $y^{n-k+2}=$ & $11\dots 1$ $01$ $01\dots 0$ \\
        $\vdots$ & $\vdots$ \\
        $y^{2n-2k}=$ & $11\dots 1$ $01$ $00\dots 1$
    \end{tabular} \qquad
    \begin{tabular}{c c}
        $y^{2n-2k+1}=$ & $\overbrace{11\dots 1}^{k-2}$ $001$ $\overbrace{10\dots 0}^{n-k-1}$ \\
        $y^{2n-2k+2}=$ & $11\dots 1$ $001$ $01\dots 0$ \\
        $\vdots$ & $\vdots$ \\
        $y^{3n-3k-1}=$ & $11\dots 1$ $001$ $00\dots 1$
    \end{tabular}
    \caption{Sequences of elements of $Y$ considered in the proof of the lower bound on $\CC_\eps(\kINT_n)$.}
    \label{tab:kINT_orderings}
\end{table}

The next sequence of elements in $Y$ we consider are $y^{i}$ for $i\in \Set{n-k+1,n-k+2,\dots 2n-2k}$ with $\abs{y^{i}_{0:k-3}}=k-2$, $y^i_{k-1}=1$ and $y^i_{k+(i-n+k-1)}=1$.
The difference with the previous sequence is that we have a conditioning on $\abs{x_{0:k-2}}<k-1$, which, together with the condition $f(x,y^i)=1$ for $i\in \Set{n-k+1,n-k+2,\dots 2n-2k}$, imply $\abs{x_{0:k-3}}=k-2$, $x_{k-1}=1$, and $x_{k-2}=0$. 
As a result, we obtain the same expression as the first sum in \cref{eq:kINT_der_4}, with the only difference that it begins with $i=1$.
Notice here, that if we kept the condition $\bigwedge_{i=k-1}^{n-1} \abs{x_{0:k-2}}+x_i<k$ instead of the simplified one $\abs{x_{0:k-2}}<k-1$, the above argument would not change.

Continuing with the construction, the next sequences of elements in $Y$ we consider are $y^{i}$ for $i\in \Set{2n-2k+1,2n-2k+2,\dots 3n-3k-1}$ with $\abs{y^{i}_{0:k-3}}=k-2$, $y^i_{k}=1$ and $y^i_{k+(i-2n+2k)}=1$, then $y^{i}$ for $i\in \Set{3n-3k,3n-3k+1,\dots 4n-4k-3}$ with $\abs{y^{i}_{0:k-3}}=k-2$, $y^i_{k+1}=1$ and $y^i_{k+(i-3n+3k+2)}=1$, and so on.
We provide \cref{tab:kINT_orderings} to help the reader navigate the sequences that we consider in this proof.
It should be clear that with each new type of sequence, we get a lower bound of the type of the first sum in \cref{eq:kINT_der_4}, but each time with the index starting from one element higher. 
Therefore, we can conclude that 
\begin{equation}\begin{split}\label{eq:kINT_der_5}
    \CC_\eps(\kINT_n)&\geq \left(1-h\left(\eps\right)\right)\sum_{j=0}^{n-k-1}\sum_{i=j}^{n-k-1}\frac{1}{2^{k+i}}(n-k-i)\\
    +&\Pr[\abs{x_{0:k-3}}<k-2]\sum_{i=\mathrm{ind}}^{2^n-1}I(g;f(x,y^i)\vert \abs{x_{0:k-3}}<k-2, \Set{f(x,y^j)}_{j=\mathrm{ind}}^{i-1},y=y^i),
\end{split}
\end{equation}
where we have again substituted the condition $\bigwedge_{i=k-2}^{n-1} \abs{x_{0:k-3}}+x_i<k-1$ with $\abs{x_{0:k-3}}<k-2$ for the same reason as above, and the index $\mathrm{ind}=\frac{(n-k+2)(n-k+1)}{2}$ is the next index for $y^i$, because for each sequence corresponding to a column in \cref{tab:kINT_orderings}, which we can enumerate with $j=0,1,\dots$, we have the first $k-1+j$ bits of $y$ fixed, and $n-k+1-j$ bits used to index a particular bit of $x$.
The number of such sequences is $n-k+1$, with the last sequence for $j=n-k$ having only one bit string.

Now the pattern shall become obvious. In~\cref{eq:kINT_der_5}, we have two main summands, same as in~\cref{eq:kINT_der_4}: The second one is the exact expression, but the first one lower bounds the terms 

Comparing \cref{eq:kINT_der_1}, \cref{eq:kINT_der_4} and \cref{eq:kINT_der_5}, one can notice a clear pattern.
At every iteration, the string $x_{0:k-i}$ in the condition $\abs{x_{0:k-i}}<k+1-i$, for $i=1,2,\dots$, gets shorter by one bit.
As a result, the lower-bounded part, which is the first summand in \cref{eq:kINT_der_1,eq:kINT_der_4,eq:kINT_der_5}, gets an additional summation.
Once we reach the conditioning $\abs{x_{0:k-i}}\leq k+1-i$ for $i=k$, we clearly do not have any terms left to bound.
At that point, we reach the following lower bound
\begin{equation}\label{eq:kINT_der_6}
    \CC_\eps(\kINT_n)\geq \left(1-h\left(\eps\right)\right)\sum_{j_{k-1}=0}^{n-k-1}\sum_{j_{k-2}=j_{k-1}}^{n-k-1}\cdots \sum_{j_1=j_2}^{n-k-1}\sum_{i=j_1}^{n-k-1}\frac{1}{2^{k+i}}(n-k-i),
\end{equation}
where the number of sums is $k$.
Now we show that the nested sum above is lower-bounded by $n-2k$.
First, we prove by induction that 
\begin{equation}
    \sum_{j_{k-1}=j}^m\sum_{j_{k-2}=j_{k-1}}^m\dots \sum_{j_1=j_2}^{m}\sum_{i=j_1}^m c_i = \sum_{i=j}^m c_i {i-j+k-1 \choose k-1},
\end{equation}
for some $\Set{c_i}_{i=0}^{m}$, where again the number of the nested sums on the left-hand side is $k$.
The base case of $k=1$ holds trivially.
Assuming for $k$, we prove it for $k+1$:
\begin{equation}
    \sum_{j=l}^m\sum_{i=j}^m c_i {i-j+k-1 \choose k-1}=\sum_{i=l}^mc_i\sum_{j=0}^{i-l}{k-1+j \choose j}=\sum_{i=l}^mc_i{i-l+k\choose k}.
\end{equation}
Therefore, inserting the values $l=0$, $m=n-k-1$, and the coefficients $c_i$ from the lower bound~\cref{eq:kINT_der_6}, we obtain
\begin{equation}
    \sum_{i=0}^{n-k-1}\frac{1}{2^{k+i}}{i+k-1\choose k-1}(n-k-i)=\sum_{i=k}^{n-1}\frac{1}{2^{i}}{i-1\choose k-1}(n-i)\geq n-2k,
\end{equation}
where the last inequality can be proven by induction in $k$, as we show below.
For the base case of $k=1$, we have
\begin{equation}
    \sum_{i=1}^{n-1}\frac{1}{2^i}(n-i) = n-2+\frac{2}{2^n}\geq n-2.
\end{equation}
The induction step from $k$ to $k+1$ is
\begin{align}
    \sum_{i=k+1}^{n-1}\frac{1}{2^{i}}{i-1\choose k}(n-i)&=\sum_{i=k}^{n-2}\frac{1}{2^{i+1}}{i\choose k}(n-i-1)=\frac{1}{2}\sum_{i=k}^{n-2}\frac{1}{2^{i}}{i-1\choose k-1}(n-1-i)+\frac{1}{2}\sum_{i=k}^{n-2}\frac{1}{2^{i}}{i-1\choose k}(n-1-i)\nonumber\\
    &\geq \frac{1}{2}(n-1-2k)+\frac{1}{2}\sum_{i=k+1}^{n-2}\frac{1}{2^{i}}{i-1\choose k}(n-1-i).\label{eq:kINT_der_7}
\end{align}
Notice that the second summand in \cref{eq:kINT_der_7} is the same as the starting expression, but for $n-1$. 
Therefore, 
\begin{equation}
    \sum_{i=k+1}^{n-1}\frac{1}{2^{i}}{i-1\choose k}(n-i)\geq \sum_{j=1}^{n-2k}(n-2k-j)\frac{1}{2^j}=n-2(k+1)+\frac{2}{2^{n-2k}}\geq n-2(k+1),
\end{equation}
which completes the proof.
\end{proof}

\section{Classification of Boolean functions of input length 2 under extended IC}\label{app:classification}

The extended \ac{IC} statement in \cref{th:ULB} induces a partition of the set of distributed functions into equivalence classes, where two functions belong in the same class if they result in the same expression in \cref{eq:ULB} up to some relabelling of random variables and the sets in which they take their values. 
Here, we provide a complete classification of the set of all distributed functions for $X=Y=\Set{0,1}^2$.

Let Alice' and Bob's inputs be $x=x_0x_1$ and $y=y_0y_1$, respectively. We begin by writing a parametrised form for a general function by breaking apart the function into subfunctions which are conditioned on the values of $y$.
Let us denote $f_{y_0y_1}(x_0,x_1)\coloneqq f(x_0x_1,y_0y_1)$, then any function $f:\Set{0,1}^2\times\Set{0,1}^2\to\Set{0,1}$ can be expressed in the following form
\begin{equation}\label{eq:f2_gen}
    f(x_0x_1,y_0y_1)=(y_0\oplus1)(y_1\oplus1)f_{00}(x_0,x_1)\oplus(y_0\oplus1)y_1f_{01}(x_0,x_1)\oplus y_0(y_1\oplus1)f_{10}(x_0,x_1)\oplus y_0\cdot y_1f_{11}(x_0,x_1).
\end{equation}
Each of the functions $f_{y_0y_1}$, in turn, can be restricted to be from the following set
\begin{equation}\label{eq:fy0y1set}
   f_{y_0y_1}(x_0,x_1)\in \Set{0,x_0,x_1,x_0\cdot x_1,x_0\oplus x_1, x_0\oplus x_0\cdot x_1,x_1\oplus x_0\cdot x_1,x_0\oplus x_1\oplus x_0\cdot x_1},
\end{equation}
because we can always add a constant term to the random variable $g$ for a fixed $y$.
More generally, we can represent all symmetry transformations of Alice's input as
\begin{equation}\label{eq:affine}
    \begin{pmatrix}
        x_0\\
        x_1
    \end{pmatrix}\to M\begin{pmatrix}
        x_0\\
        x_1
    \end{pmatrix}+\begin{pmatrix}
        c_0\\
        c_1
    \end{pmatrix},
\end{equation}
for invertible $M\in \Set{0,1}^{2\times 2}$ and $c_0,c_1\in \{0,1\}$, which captures the relabelling of the bits $x_0\leftrightarrow x_1$, adding a constant term to each of them and substituting one of the bits with the XOR of the two bits, e.g., $x_0\to x_0\oplus x_1$.

Our goal is to classify all expressions of the form
\begin{multline}\label{eq:ULB_class}
    I(g;f_{00}(x_0,x_1)\vert y=00)+I(g;f_{01}(x_0,x_1)\vert f_{00}(x_0,x_1),y=01)+I(g;f_{10}(x_0,x_1)\vert f_{00}(x_0,x_1),f_{01}(x_0,x_1),y=10)\\+I(g;f_{11}(x_0,x_1)\vert f_{00}(x_0,x_1),f_{01}(x_0,x_1),f_{10}(x_0,x_1),y=11),
\end{multline}
for the functions $f_{y_0y_1}(x_0,x_1)$ chosen from the set in \cref{eq:fy0y1set}, under the affine transformations in \cref{eq:affine}.
In particular, the choice of the first function $f_{00}$ can be restricted to the subset $\Set{0,x_0,x_0\cdot x_1}$. 
Having chosen $f_{00}$, the symmetries can similarly help us to restrict the choice of $f_{01}$, and so on.
We can depict this process with a tree graph, with each node representing the choice of function $f_{y_0y_1}$ for $y_0y_1\in\Set{0,1}^2$.
In \cref{fig:tree} we show this graph for $f_{00}(x_0,x_1) = 0$, and the other two cases can be analysed similarly.

Having enumerated all combinations $(f_{00},f_{01},f_{10},f_{11})$ with the help of the above procedure, we further identify those that lead to equivalent expressions in \cref{eq:ULB_class}.
As a result, we obtain the following eight classes of the extended \ac{IC} expressions:
\begin{equation}\label{eq:8_classes}
    \begin{tabular}{r l}
    {Class I}:& \quad $0$,\\
    {Class II}:& \quad  $I(g;x_0\vert y=00)$,\\
    {Class III}:& \quad  $I(g;x_0\cdot x_1\vert y=00)$,\\
    {Class IV}:& \quad  $I(g;x_0\vert y=00)+I(g;x_1\vert x_0, y=01)$,\\
    {Class V}:& \quad  $I(g;x_0\vert y=00)+I(g;x_0\cdot x_1\vert x_0,y=01)$,\\
    {Class VI}:& \quad  $I(g;x_0\cdot x_1\vert y=00)+I(g;x_0\vert x_0\cdot x_1, y=01)$,\\
    {Class VII}:& \quad   $I(g;x_0\vert y=00)+I(g;x_0\cdot x_1\vert x_0,y=01)+I(g;x_1\vert x_0,x_0\cdot x_1,y=10)$,\\
    {Class VIII}:& \quad  $I(g;x_0\cdot x_1\vert y=00)+I(g;x_0\vert x_0\cdot x_1,y=01)+I(g;x_1\vert x_0,x_0\cdot x_1,y=10).$
    \end{tabular}
\end{equation}
We now study how the above classes compare to each other with regard to the strength of bounding the set of quantum correlations in Bell experiments.
Obviously, Class I is a trivial class and has no implications on the set of quantum correlations.
For the rest of the classes, instead of trying to test the corresponding criteria on some correlations, such as \ac{PR} boxes, we use the following simple criteria. 
Consider a map $(x_0,x_1)\rightarrow(x_0\cdot x_1,x_1)$ which Alice can apply to her input bits. Under this transformation, Class II gets mapped to Class III, and analogously, $(x_0,x_1)\rightarrow(x_0,x_0\cdot x_1)$ maps Class IV to Class V. Since the maps are non-invertible, the constraints obtained by Classes II and IV must be at least as strong as the ones in Classes III and V, respectively. 
Using a similar line of reasoning, we rank the relative bounding power of the rest of the classes and obtain a partial hierarchy shown in \cref{fig:hierarchy}.

\begin{figure}[t!]
    \centering
    \subfloat[\label{fig:tree}]{%
    \includegraphics[height=20em]{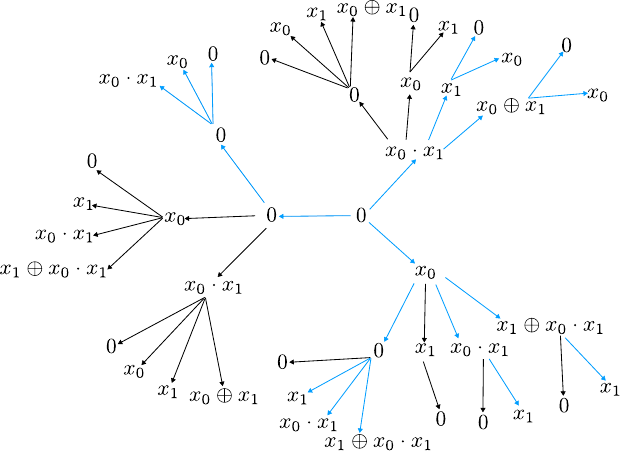}
    }\qquad
    \subfloat[\label{fig:hierarchy}]{%
    \includegraphics[height=20em]{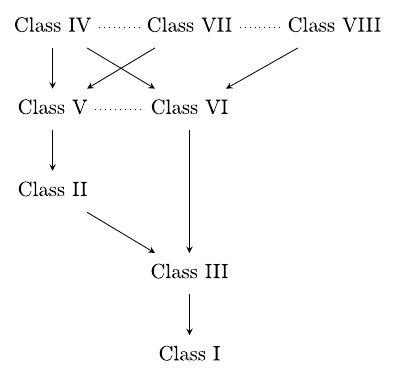}
}
\caption{a) The process of classification of Boolean functions with equivalent form of the lower bound in \cref{th:ULB} for $X=Y=\Set{0,1}^2$. The root $0$ denotes the case of $f_{00}(x_0,x_1)=0$, and each node represents a choice for $f_{y_0y_1}$, with the leafs corresponding to the choice of $f_{11}$.
The blue arrows represent twelve unique branches, which are then further reduced to eight non-equivalent extended \ac{IC} statements.
b) Relations between the eight equivalence classes in \cref{eq:8_classes} with respect to the constraints that they impose on the set of quantum correlations.
An arrow between two classes represents that one class of functions implies the other.
}
\end{figure}

\section{Comparison between functions within the same equivalence class}\label{app:convolution}
In the Discussion and Outlook section, we mention that even though two functions have the same form of the lower bound in \cref{th:ULB}, they may have different implications on the set of quantum correlations.
For example, $\INDEX_2$, $\IP_2$, and $\DISJ_2$ all belong to Class IV, but each requires $1$, $2$, and $3$ \ac{PR}-boxes, respectively, to be computed perfectly with one bit of communication, using the construction of van Dam~\cite{vandam2013implausible}. 
In fact, it can be shown that $3$ is the maximal number of \ac{PR}-boxes required for any distributed function of input length $2$.

Indeed, let $a$ and $b$, taking values in $\Set{0,1}$ be the outputs of a \ac{PR}-box shared between Alice and Bob. 
Then for any two functions $\alpha:X\to \Set{0,1}$, and $\beta:Y\to \Set{0,1}$, Alice and Bob can distributively calculate their product $\alpha(x)\cdot\beta(y)$ (logical AND) by inserting the values of these functions in the \ac{PR}-box, and by Alice sending to Bob her output $a$, because then $a\oplus b=\alpha(x)\cdot\beta(y)$~\cite{popescu1994quantum}.
We refer to this way of calculating the logical AND of two functions as the \emph{non-local AND}.

We can now use the insights from classifications of functions in \cref{app:classification} to conclude that an arbitrary function in \cref{eq:f2_gen} can be rewritten as

\begin{equation}\label{eq:func_exp}
    \begin{split}
        f(x_0x_1,y_0y_1)&=f_{00}(x_0,x_1)\oplus y_0\cdot (f_{00}(x_0,x_1)\oplus f_{10}(x_0,x_1))\oplus y_1\cdot (f_{00}(x_0,x_1)\oplus f_{01}(x_0,x_1))\\
        &\oplus y_0\cdot y_1\cdot(f_{00}(x_0,x_1)\oplus f_{01}(x_0,x_1)\oplus f_{10}(x_0,x_1)\oplus f_{11}(x_0,x_1)),
    \end{split}
\end{equation}
and thus requires at most $3$ non-local AND operations.
Notice, that for $\DISJ_2$, we have $f_{00}(x_0,x_1)=1$, $f_{10}(x_0,x_1)=1\oplus x_0$, $f_{01}(x_0,x_1)=1\oplus x_1$, and $f_{11}(x_0,x_1)=1\oplus x_0\cdot x_1$.

An expansion similar to one in~\cref{eq:func_exp} can be shown to hold for an arbitrary input length $n$ with a maximum of $2^n-1$ non-local AND operations. 
As one would expect, $\DISJ_n$ is a function that saturates this limit. 

Now, we compare the functions from the same equivalence class in terms of their utility in bounding the set of quantum correlations.
More precisely, we consider non-perfect \ac{PR}-boxes, with 
\begin{equation}
    \Pr[a\oplus b=\alpha(x)\cdot\beta(y)] \eqqcolon \frac{1+e}{2},
\end{equation}
where $e\in [-1,1]$ is the \emph{bias} of the \ac{PR}-box, and as above we denote by $\alpha:X\to \Set{0,1}$ and $\beta:Y\to\Set{0,1}$ the inputs to the box and as $a$ and $b$ the box's outputs for Alice and Bob, respectively.
Our goal is to infer a constraint that \cref{th:ULB} can impose on the bias $e$.
As an example, let us compare $\IP_2:\Set{0,1}$ and $\INDEX_2$, which belong to Class IV~\eqref{eq:8_classes}.
Let us denote Alice's input again as $x=x_0x_1$, and Bob's input as $y=y_0y_1\in\Set{0,1}^2$ for the inner product and simply as $y\in\Set{0,1}$ for the index function, respectively.
Since $\IP_2$ needs two \ac{PR}-boxes to be computed perfectly, we add an index $i\in\Set{0,1}$ to boxes' inputs and outputs, as well as to their biases.

Let Alice and Bob use the following protocols~\cite{vandam2013implausible} for computing these functions:
\begin{align}
    \IP_2(x_0x_1,y_0y_1) &= x_0\cdot y_0\oplus x_1\cdot y_1, &  \INDEX_2(x_0x_1,y) &=x_y,\\
    \alpha_i(x) = x_i,\; \beta_i(y) &= y_i,\; i\in\Set{0,1} & \alpha(x) = x_0\oplus x_1,\; \beta(y) &=y,\nonumber \\
    M =a_0\oplus a_1,\; g &= M\oplus b_0\oplus b_1, & M =a\oplus x_0,\; g &=M\oplus b,\nonumber
\end{align}
where $M$, taking values in $\Set{0,1}$, is the message that Alice sends to Bob.

The probabilities of Bob guessing the value of the function then satisfy
\begin{multline}
    \Pr[g=\INDEX_2(x,y)\vert x=k,y=l] = \Pr[a\oplus b\oplus x_0 = x_y \vert x=k,y=l] = \Pr[a\oplus b = \alpha(k)\cdot \beta(l)] = \frac{1+e}{2},
\end{multline}
for the index function, where $k\in\Set{0,1}^2$, $l\in\Set{0,1}$, and 
\begin{multline}
    \Pr[g=\IP_2(x,y)\vert x=k,y=l] = \Pr[a_0\oplus a_1\oplus b_0\oplus b_1 = x_0\cdot y_0\oplus x_1\cdot y_1 \vert x=k,y=l] \\ = \sum_{j\in\{0,1\}}\Pr[a_0\oplus b_0 = \alpha_0(x)\cdot \beta_0(y)\oplus j\vert x=k,y=l]\Pr[a_1\oplus b_1 = \alpha_1(x)\cdot \beta_1(y)\oplus j\vert x=k,y=l] = \frac{1+e_0e_1}{2},
\end{multline}
for the inner product function, where $k,l\in\Set{0,1}^2$.
Now, when we extract the implications of the extended \ac{IC} principle on $e$ and $e_0e_1$ (in analogy to the standard \ac{IC} principle, e.g., via a method described in Ref.~\cite{jain2024informationcausality}), it is clear that we obtain a stronger constraint on $e$ in comparison to the constraint \emph{per} \ac{PR}-box on $e_0$ and $e_1$.
This argument suggests that functions that require the least number of non-local AND operations to be computed result in tighter constraints on the set of quantum correlations.

\twocolumngrid
\bibliography{bibliography}

\end{document}